\newcommand{\eps}{\varepsilon}
\newcommand{\IGNORE}[1]{}
\newcommand{\cC}{{\cal C}}
\newcommand{\cD}{{\cal D}}
\newcommand{\cA}{{\cal A}}
\newcommand{\cB}{{\cal B}}
\newcommand{\cont}{\mathtt{cont}}
\newcommand{\down}{\mathtt{down}}
\newcommand{\halfr}[1]{(-\infty,#1]}
\newcommand{\ra}{\mathrm{rank}}
\newcommand{\pred}{\mathrm{pred}}
\newcommand{\ssucc}{\mathrm{succ}}
\title{Four-Dimensional Dominance Range Reporting in Linear Space}
\author{Yakov Nekrich}{Department of Computer Science, Michigan Technological University}{yakov.nekrich@googlemail.com}{}{}
\authorrunning{Y. Nekrich}
\keywords{Range searching, geometric data structures, word RAM}
\begin{document}

\maketitle


\begin{abstract}
  In this paper we study the four-dimensional dominance range reporting problem  and present data structures with linear or almost-linear space usage. Our results can be also used to answer four-dimensional queries that are bounded on five sides. The first data structure presented in this paper uses linear space and answers queries in 
 $O(\log^{1+\eps}n + k\log^{\eps} n)$ time, where $k$ is the number of reported points, $n$ is the number of points in the data structure, and $\eps$ is an arbitrarily small positive constant. Our second data structure uses $O(n \log^{\eps} n)$ space and answers queries in $O(\log n+k)$ time.

These are the first data structures for this problem that use linear (resp.\ $O(n\log^{\eps} n)$) space and answer queries in poly-logarithmic time. For comparison the fastest previously known linear-space or $O(n\log^{\eps} n)$-space data structure supports queries in $O(n^{\eps} + k)$ time (Bentley and Mauer, 1980).  Our results can be  generalized to $d\ge 4$ dimensions. For example, we can answer $d$-dimensional dominance range reporting queries in $O(\log\log n (\log n/\log\log n)^{d-3} + k)$ time using $O(n\log^{d-4+\eps}n)$ space. Compared to the fastest previously known result~(Chan, 2013), our data structure reduces the space usage by $O(\log n)$ without increasing the query time.
\end{abstract}
\section{Introduction}
In the orthogonal range searching problem we keep a set $S$ of multi-dimensional points in a data structure so that for an arbitrary axis-parallel query rectangle $Q$ some information about points in $Q\cap S$ can be computed efficiently. Range searching is one of the most fundamental and widely studied problems in computational geometry.  Typically we want to  compute some aggregate function on $Q\cap S$ (range aggregate queries), generate the list of points in $S$ (reporting queries) or determine whether $Q\cap S=\emptyset$ (emptiness queries). In this paper we study the  complexity of  four-dimensional orthogonal range reporting and orthogonal range emptiness queries in the case of dominance queries and in the case when the query range is bounded on  five sides. We demonstrate for the first time that in this scenario both queries can be answered in poly-logarithmic time using linear or almost-linear space. 

Range trees, introduced by Lueker~\cite{Lueker78} in 1978 and Bentley~\cite{Bentley80} in 1980, provide a solution for the range reporting problem in $O(n\log^dn)$ space and $O(\log^d n+k)$ time for any constant dimension $d$. Henceforth $k$ denotes the number of points in the answer to a reporting query and $n$ denotes the number of points in the data structure. A number of improvements both in time and in space complexity were obtained in the following decades. See e.g.,~\cite{McCreight85,Chazelle86,ChazelleG86,ChazelleG86a,ChazelleE87,Chazelle88,Overmars88,Chazelle90a,Chazelle90b,SubramanianR95,VengroffV96,AlstrupBR00,AlstrupBR01,Nekrich07isaac,Nekrich07,Nekrich07algorithmica,KarpinskiN09,NN12,ChanLP11,Chan13} for a selection of previous works on range reporting and related problems. Surveys of previous results can be found in~\cite{Agarwal04,agarwal1999geometric,Nekrich16}.  We say that a range query is $f$-sided if the query range is bounded on $f$ sides, i.e., a query can be specified with $f$ inequalities; see Fig.~\ref{fig:queriesex} on p.~\pageref{fig:queriesex}. Researchers noticed that the space and time complexity of range reporting depends not only on the dimensionality: the number of sides that bound the query range is also important.  Priority search tree, introduced by McCreight~\cite{McCreight85}, provides an $O(n)$  space and $O(\log n + k)$ time solution for 3-sided range reporting queries in two dimensions. In~\cite{ChazelleE87} Chazelle and Edelsbrunner have demonstrated that three-dimensional $3$-sided queries (aka   three-dimensional dominance queries) can be answered in $O(\log^2 n+k)$ time using an $O(n)$ space data structure. In 1985 Chazelle~\cite{Chazelle88} described a compact version of the two-dimensional range tree that uses $O(n)$ space and supports general (4-sided) two-dimensional range reporting queries in $O(\log n+ k\log^{\eps} n)$ time, where $\eps$ denotes an arbitrarily small positive constant. In~\cite{Chazelle88} the author also presented an $O(n)$ space data structure that supports 5-sided three-dimensional reporting queries in  $O(\log^{1+\eps}n+k\log^{\eps}n)$ time. Bentley and Mauer~\cite{BentleyM80} described a linear-space data structure that supports $d$-dimensional range reporting queries for any constant $d$; however, their data structure has prohibitive query cost $O(n^{\eps} + k)$. 

Summing up, we can answer range reporting queries in poly-logarithmic time using an $O(n)$ space data structure 
when the query is bounded on at most 5 sides and the query is in two or three dimensions.   
Significant improvements were achieved on the query complexity of this problem in each case; see Table~\ref{tbl:linspace}. However, surprisingly,  linear-space and polylog-time data structures are known only for the above mentioned special cases of the range reporting.  For example, to answer four-dimensional 4-sided queries (four-dimensional dominance queries) in polylogarithmic time  using previously known solutions one would need $\Omega(n\log n/\log\log n)$ space. This situation does not change when we increase the space usage to $O(n\log^{\eps}n)$ words: data structures with poly-logarithmic time are known for the above described  special cases only. See Table~\ref{tbl:almostlin}.

\begin{figure}[tb]
  \centering
  \begin{tabular}[tbh]{ccc} \includegraphics[width=.3\textwidth,page=2]{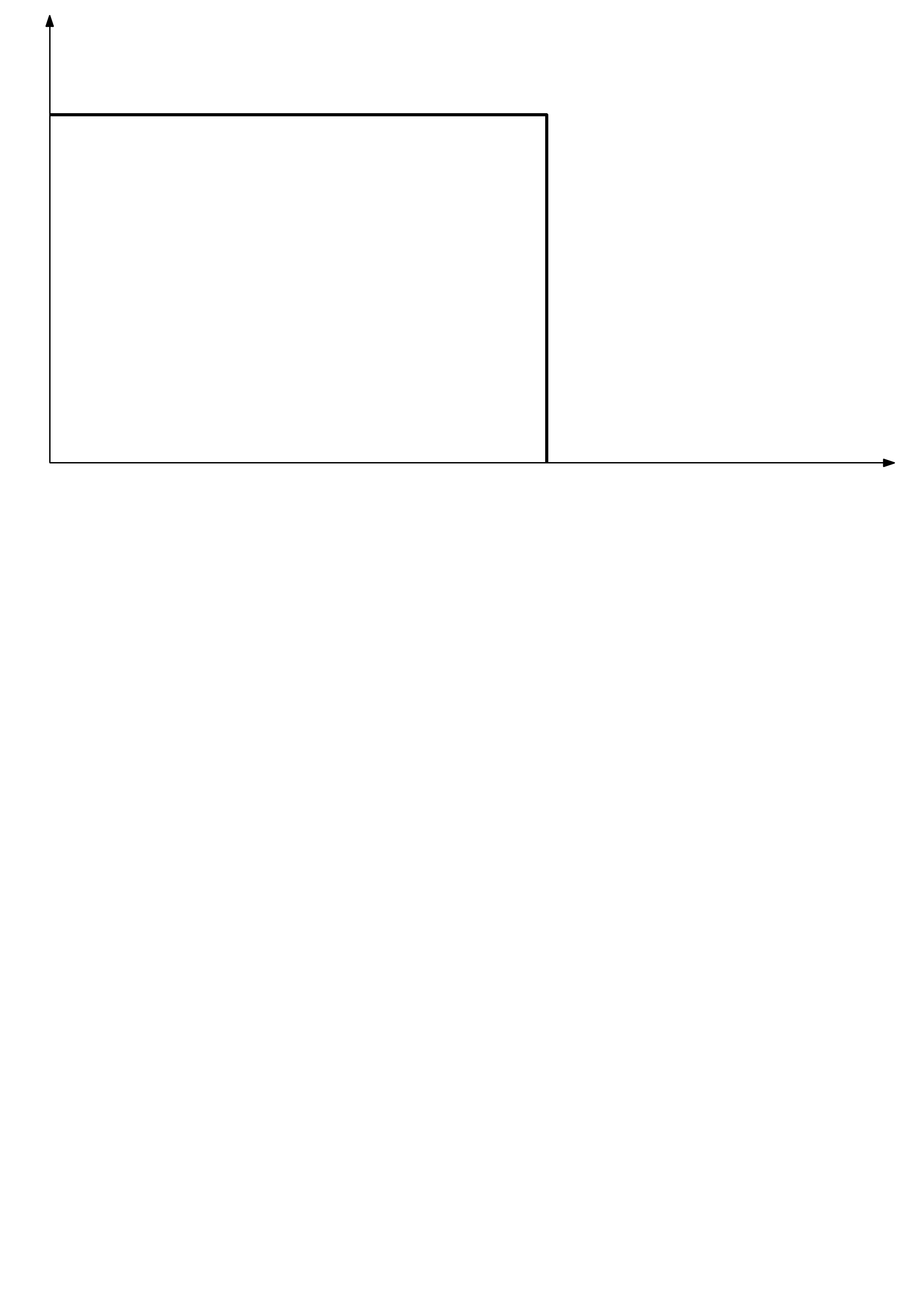} &  \hspace*{.1\textwidth}\includegraphics[width=.15\textwidth,page=3]{range-queries-examples} & \hspace*{.15\textwidth}\includegraphics[width=.2\textwidth,page=4]{range-queries-examples} \\
(a) & (b) & (c) \\
  \end{tabular}
 
  \caption{Examples of  queries in two and three dimensions. (a) A two-dimensional 3-sided query (b) A three-dimensional 3-sided (dominance) query (c) A three-dimensional 5-sided query.}
  \label{fig:queriesex}
\end{figure}

Previous results raise the question about  low-dimensional range reporting. What determines the  complexity of range reporting data structures in $d\ge 4$ dimensions: the dimensionality or the number of sides in the query range? The lower bound of Patrascu~\cite{Patrascu11} resolves this question with respect to query complexity. It is shown in~\cite{Patrascu11} that any data structure using $O(n\mathtt{polylog}(n))$ space needs $\Omega(\log n/\log\log n)$ time to answer four-dimensional dominance (4-sided) queries.  On the other hand, two- and three-dimensional 4-sided queries can be answered in $O(\log\log n + k)$ time using $O(n\mathtt{polylog}(n))$ space. In this paper we address  the same question with respect to space complexity. 

We demonstrate that four-dimensional 5-sided queries can be answered in $O(\log^{1+\eps} n+k\log^{\eps} n)$ time using an $O(n)$ space data structure. Our data structure can also support 5-sided emptiness queries in $O(\log^{1+\eps}n)$ time. If the space usage is slightly increased to $O(n\log^{\eps} n)$, then we can answer reporting and emptiness queries in $O(\log n +k)$ and $O(\log n)$ time respectively. 
For comparison, the fastest previous method~\cite{Chan11} requires $O(n\log^{1+\eps} n)$ space and supports queries in $O(\log n + k)$ time. Since dominance queries are a special case of 5-sided queries, our results can be used to answer four-dimensional dominance queries within the same time and space bounds.  Using standard techniques, our results can be generalized to $d$ dimensions for any constant $d\ge 4$: We can answer $d$-dimensional dominance queries in $O(\log\log n (\log n/\log\log n)^{d-3} + k)$ time and $O(n \log^{d-4+\eps} n)$ space. We can also answer arbitrary $(2d-3)$-sided $d$-dimensional range reporting queries within the same time and space bounds.


Our base data structure is the range tree on the fourth coordinate and every tree node contains a data structure that answers three-dimensional dominance queries. We design a space-efficient representation of points stored in tree nodes, so that each point uses only $O(\log\log n)$ bits.  Since each point is stored $O(\log n/\log\log n)$ times in our range tree, the total space usage is $O(n\log n)$ bits. Using our representation we can answer three-dimensional queries on tree nodes  in $O(\log^{\eps}n)$ time; we can also ``decode'' each point, i.e., obtain the actual point coordinates from its compact representation, in $O(\log^{\eps} n)$ time. Efficient representation of points is the core idea of our method: we keep a geometric construct called \emph{shallow cutting} in each tree node and exploit the relationship between shallow cuttings in different nodes. Shallow cuttings were extensively used in previous works to decrease the query time.  But, to the best of our knowledge, this paper is the first that uses shallow cuttings to reduce the space usage. The novel part of our construction is a combination of several shallow cuttings that allows us to navigate between the nodes of the range tree and ``decode'' points from their compact representations.  We recall standard  techniques used by our data structure in Section~\ref{sec:prelim}. The linear-space data structure is described in Section~\ref{sec:5sid4dim}. In Section~\ref{sec:decodingfast} we show how the decoding cost can be reduced to $O(1)$  by slightly increasing the space usage. The data structure described in  Section~\ref{sec:decodingfast} supports queries in $O(\log n +k)$ time and uses $O(n\log^{\eps} n)$ space.  Previous and new results for 4-sided queries in four dimensions are listed in Table~\ref{tbl:dom4dim}. Compared to the only 
previous linear-space data structure~\cite{BentleyM80}, we achieve exponential speed-up in query time. Compared to the fastest previous result~\cite{Chan13}, our data structure reduces the space usage by $O(\log n)$ without increasing the query time.


The model of computation used in this paper is the standard RAM model. 
The space is measured in words of $\log n$ bits and we can perform standard arithmetic operations, such as addition and subtraction, in $O(1)$ time. Our data structures rely on bit operations, such as bitwise AND or bit shifts or identifying the most significant bit in a word. However such operations are performed only on small integer values (i.e., on positive integers bounded by $n$)  and  can be easily implemented using  look-up tables. Thus our data structures can be implemented on the RAM model with standard arithmetic operations and arrays. 

We will assume in the rest of this paper that all point coordinates are bounded by $n$. The general case can be reduced to this special case using the reduction to rank space, described in Section~\ref{sec:prelim}. All results of this paper  remain valid when the original point coordinates are real numbers.



\begin{table}[tb]
  \centering
  \begin{tabular}[tb]{|l|c|c|c|c|}
    \hline
    Query   & \multicolumn{2}{c|}{First}  & \multicolumn{2}{c|}{Best} \\ \cline{2-5}
    Type         &  Ref  & Query Time & Ref & Query Time \\
    \hline
    2-D 3-sided &  \cite{McCreight85}, 1985 & $O(\log n+k)$ & \cite{AlstrupBR00}, 2000 & $O(k+1)$ \\
    2-D 4-sided &  \cite{Chazelle88}, 1985 & $O(\log n + k\log^{\eps} n)$ & \cite{ChanLP11}, 2011 & $O((k+1)\log^{\eps} n)$\\
    3-D 3-sided &  \cite{ChazelleE87}, 1986    & $O(\log^2 n+k)$ & \cite{Chan11}, 2011 & $O(\log\log n+k)$ \\
    3-D 5-sided &  \cite{Chazelle88}, 1985   & $O(\log^{1+\eps} n + k\log^{\eps} n)$ &  \cite{FarzanMR12}, 2012 &  $O(\log n\log\log n)$\textsuperscript{{\small E}} \\ \hline
    4-D 5-sided & New   & \multicolumn{3}{c|}{$O(\log^{1+\eps} n + k\log^{\eps}n)$}\\\hline
  \end{tabular}
  \caption{Linear-space data structures for different types of queries. The second column provides the reference to the first data structure achieving linear space and the year of the first publication. The third column contains the query time of the first data structure. The fourth and the fifth columns contain the same information about  the best (fastest) currently known data structure. Result marked with {\small E} supports only emptiness queries. Data structures in rows 4 and 5 also support $4$-sided queries.}
\label{tbl:linspace}
\end{table}

\begin{table}[tb]
  \centering
  \begin{tabular}[tb]{|l|c|c|c|c|}
    \hline
    Query   & \multicolumn{2}{c|}{First}  & \multicolumn{2}{c|}{Best} \\ \cline{2-5}
    Type         &  Ref  & Query Time & Ref & Query Time \\
    \hline
    2-D 4-sided &  \cite{Chazelle88}, 1985  & $O(\log n +k)$ & \cite{AlstrupBR00}, 2000 & $O(\log \log n +k)$\\
    3-D 5-sided &  \cite{Chazelle88}, 1985   & $O(\log n + k)$ &  \cite{ChanLP11}, 2011 &  $O(\log\log n+k)$\\ \hline
    4-D 5-sided & New   & \multicolumn{3}{c|}{$O(\log  n + k)$}   \\\hline
  \end{tabular}
  \caption{$O(n\log^{\eps} n)$-space data structures for orthogonal range reporting. The second and the third columns contain the reference to and the query time of the first data structure. The fourth and the fifth columns contain the reference to and the query time of the best (fastest) currently known data structure. Data structures in rows 2 and 3 also support $4$-sided queries.}
\label{tbl:almostlin}
\end{table}

\begin{table}[tb]
  \centering
  \begin{tabular}{|c|c|c|} \hline
    Ref.\ & Space & Query Time \\ \hline
    \cite{BentleyM80}  & $O(n)$       & $O(n^{\eps} +k)$\\
    \cite{ChazelleE87} & $O(n\log n)$ & $O(\log^2 n + k)$\\
    \cite{JaJaMS04} & $O(n\log n)$ & $O(\log^2 n/\log \log n+ k)$\\
    \cite{Nekrich07}  & $O(n\log^{2+\eps} n)$ & $O(\log n\log\log n + k)$ \\
    \cite{Afshani08} & $O(n\log^{1+\eps} n)$ & $O(\log n\log\log n + k)$\\
    \cite{Chan11} & $O(n\log^{1+\eps} n)$ & $O(\log n + k)$\\
    \cite{Chan11} & $O(n \log n)$ & $O(\log n \log\log n + k)$\\ \hline
    New     & $O(n\log^{\eps} n)$ & $O(\log n + k)$\\
    New     & $O(n)$ & $O(\log^{1+\eps} n + k\log^{\eps} n)$ \\ \hline
  \end{tabular}
  \caption{Previous and new results on dominance range reporting in four dimensions. \IGNORE{We remark that our new data structures can also answer 5-sided queries.} Results in lines 2, 3, and 7 can be modified so that space is decreased to $O(n\log n/\log \log n)$ and the query time is increased by $O(\log^{\eps}n)$ factor.}
  \label{tbl:dom4dim}
\end{table}


\section{Preliminaries}
\label{sec:prelim}
In this paper $\eps$ will denote an arbitrarily small positive constant. We will consider four-dimensional points in a space with coordinate axes denoted by $x$, $y$, $z$, and $z'$. The following techniques belong to the standard repertoire of  geometric data structures.  

\subparagraph*{Shallow Cuttings.} A point $q$ dominates a point $p$ if and only if every coordinate of $q$ is larger than or equal to the corresponding coordinate of $p$. The \emph{level} of a point $q$ in a set $S$ is the number of points $p$ in $S$, such that $q$ dominates $p$ (the point $q$ is not necessarily in $S$). We will say that a \emph{cell} $C$ is a region of space dominated by a point $q_c$ and we will call $q_c$ the apex point of $C$. A $t$-shallow cutting of a set $S$ is a collection of cells $\cC$, such that (i) every point in $\mathbb{R}^d$ with level at most $t$ (with respect to $S$) is contained in some cell $C_i$ of $\cC$ and (ii) if a point $p$ is contained in some cell $C_j$ of $\cC$, then the level of $p$ in $S$ does not exceed $2t$. The size of a shallow cutting is the number of its cells. We can uniquely identify a shallow cutting $\cC$ by listing its cells and every cell can be identified by its apex point. Since the level of any point in a cell $C_j$ does not exceed $2t$, every cell $C_j$ contains at most $2t$ points from $S$, $|C_j\cap S|\le 2t$ for any $C_j$ in $\cC$. 

There exists a $t$-shallow cutting of size $O(n/t)$ for $d=2$~\cite{VengroffV96} or $d=3$ dimensions~\cite{Afshani08}. 
Shallow cuttings and related concepts are frequently used in data structures  for three-dimensional dominance range reporting queries.  

Consider a three-dimensional point $q_3=(a,b,c)$ and the corresponding dominance range $Q_3=\halfr{a}\times \halfr{b}\times \halfr{c}$. We can find a cell $C$ of a $t$-shallow cutting $\cC$ that contains $q_3$ (or report that there is no such $C$) by answering a point location query in a planar rectangular subdivision of size $O(n/t)$.  Point location queries in a rectangular subdivision of size $n$ can be answered in $O(\log\log n)$ time using an $O(n)$-space data structure~\cite{Chan13}.

\subparagraph*{Range Trees.}
A range tree is a data structure that  reduces $d$-dimensional orthogonal range reporting queries to a small number of $(d-1)$-dimensional queries. Range trees provide a general method to solve $d$-dimensional range reporting queries for any constant dimension $d$. In this paper we use this data structure to reduce four-dimensional 5-sided reporting queries to three-dimensional 3-sided  queries. A range tree for a set $S$ is a balanced tree that holds the points of $S$ in the leaf nodes, sorted by their $z'$-coordinate. We associated a set $S(u)$ with every internal node $u$. $S(u)$ contains all points $p$ that are stored in the leaf descendants of $u$.  We assume that each internal node of $T$ has $\rho=\log^{\eps} n$ children. Thus every point is stored in $O(\log n/\log \log n)$ sets $S(u)$.  We keep a data structure that supports three-dimensional reporting queries on $S(u)$ for every node $u$ of the range tree. 

Consider a  query $Q=Q_3\times [a,b]$, where $Q_3$ denotes a 3-sided three-dimensional query range. Let $\ell_a$ be the rightmost leaf that holds some $z'$-coordinate $a'< a$ and let $\ell_b$ be the leftmost leaf that holds some $b'> b$. Let $v_{ab}$ denote the lowest common ancestor of $\ell_a$ and $\ell_b$.  We denote by  $\pi_a$ (resp.\ $\pi_b$) the set of nodes on the path from $\ell_a$ ($\ell_b$) to $v_{ab}$, excluding the node $v_{ab}$.  We will say that $u$ is a left (right) sibling of $v$ iff $u$ and $v$ have the same parent node and $u$ is to the left (respectively, to the right) of $v$.  The set $\pi'_a$ consists of all nodes $u$ that have some left sibling $v\in \pi_a$ and $\pi'_b$ consists of all nodes $u$ that have a right sibling $v\in \pi_b$.  The set $\pi''_a$ ($\pi''_b$) consists of all nodes in $\pi'_a$ (resp.\ in $\pi'_b$) that are not children of $v_{ab}$. The set $\pi'_{ab}$ consists of all children of $v_{ab}$ that are in $\pi'_a\cap \pi'_b$. For any point $p\in S$, $a\le p.z'\le b$ iff $p\in S(u)$ for some $u$ in $\pi''_a\cup \pi''_b\cup \pi'_{ab}$.  Nodes $u\in \pi''_a\cup\pi''_b\cup\pi'_{ab}$ are called canonical nodes for the range $[a,b]$. See an example on Fig.~\ref{fig:rangetree}. In order to answer a four-dimensional query $Q$ we visit every  canonical node $u$ and report all points $p\in S(u)\cap Q_3$.
\begin{figure}[tb]
  \centering
  \includegraphics[width=.45\textwidth]{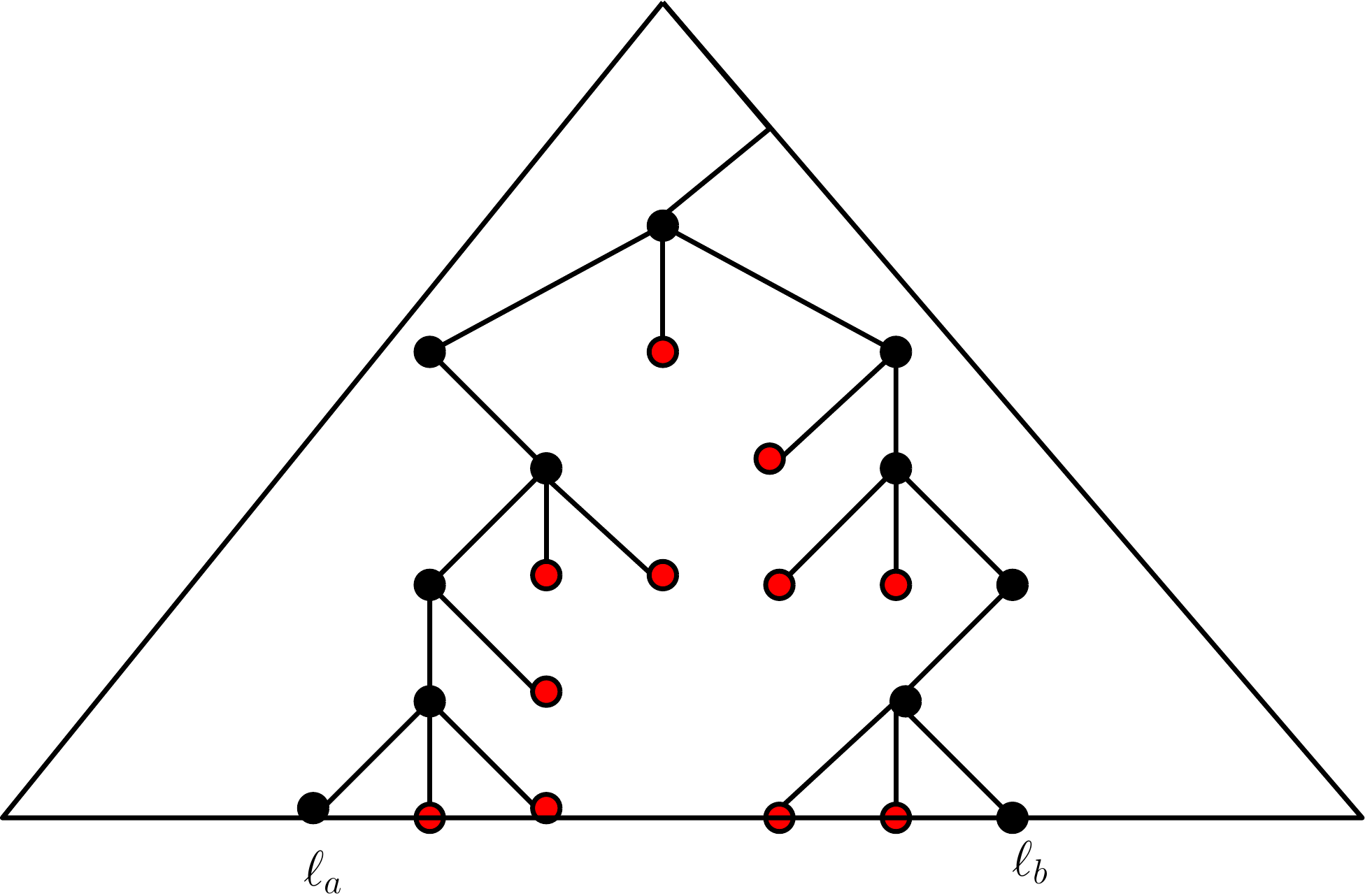}
  \caption{Example of a query in a range tree with node degree $\rho=3$. Canonical nodes are shown in red. Only nodes of $\pi_a\cup \pi_b$ and $\pi'_a\cup \pi'_b$ are shown. }
  \label{fig:rangetree}
\end{figure}

Thus we can answer a four-dimensional 5-sided query by answering  $O(\rho\cdot \log n/\log\log n)$  three-dimensional 3-sided queries in canonical nodes.

\subparagraph*{Rank Space.}
Let $E$ be a set of numbers. The \emph{rank} of a number  $x$ in $E$ is the number of elements in $E$ that do not exceed $x$: $\ra(x,E)=|\{\,e\in E\,|\, e\le x\,\}|$. Let $\pred(x,E)=\max\{\,e\in E\,|\,e\le x\,\}$ and $\ssucc(x,E)=\min\{\,e\in E\,|\, e\ge x\,\}$. An element $e\in E$ is in the range $[a,b]$, $a\le e\le b$, iff its rank satisfies the inequality $a'\le \ra(e,E) \le b'$ where $a'=\ra(\ssucc(a,E),E)$ and $b'=\ra(b,E)$.  Hence we can report all $e\in E$ satisfying $a\le e\le b$ by finding all elements $e$  satisfying $a'\le \ra(e,E)\le b'$. 

The same approach can be also extended to multi-dimensional range reporting. A three-dimensional transformation is implemented as follows. We say that a three-dimensional point $p\in S$ is reduced to rank space (or $p$ is in the rank space) if  each coordinate of $p$ is replaced by its rank in the set of corresponding  coordinates. That is,  each point $p=(p.x,p.y,p.z)$ is replaced with $\xi(p)=(\ra(p.x,S_x),\ra(p.y,S.y), \ra(p.z,S_z))$, where $S_x$, $S_y$, and $S_z$ denote the sets of $x$-, $y$-, and $z$-coordinates of points in $S$. For a point $p\in S$ we have 
\[ p\in [a,b]\times [c,d]\times [e,f] \Leftrightarrow \xi(p)\in [a',b']\times [c',d']\times [e',f']\]  where $a'=\ra(\ssucc(a,S_x),S_x)$, $c'=\ra(\ssucc(c,S_y),S_y)$, $e'=\ra(\ssucc(e,S_z),S_z)$, $b'=\ra(b,S_x)$, $d'=\ra(d,S_y)$, and $f'=\ra(f,S_z)$. Thus we can reduce three-dimensional queries on a set $S$ to three-dimensional queries on a  set $\{\,\xi(p)\,|\,p\in S\,\}$.  Suppose that  we store the set $\xi(S)$ in a data structure that answers queries in time $t(n)$ and uses space $s(n)$. Suppose that we also keep data structures that answers predecessor queries on $S_x$, $S_y$, and $S_z$.  Then we can answer orthogonal range reporting queries on $S$  in time $t(n)+O(t_{\pred}(n))$ using space $s(n)+O(s_{\pred}(n))$. Here $s_{\pred}(n)$ and $t_{\pred}(n)$ are the space usage and query time of the predecessor data structure.  Additionally we need a look-up table that computes $\xi^{-1}(p)$, i.e., computes the coordinates of a point $p$ from its coordinates in the rank space. As we will show later, in some situations this table is not necessary. Summing up, reduction to rank space enables us to reduce the range reporting problem on a set $S\subset \mathbb{R}^3$  to a special case when all point coordinates are positive integers bounded by $|S|$.  

The same rank reduction technique can be applied to range reporting in any constant dimension $d$. In the rest of this paper we will assume for simplicity that all points of $S$ are in the rank space.


\section{Five-Sided Range Reporting  in Linear Space}
\label{sec:5sid4dim}
\subparagraph*{Base Structure.} 
We keep all points in  a range tree  that is built on the fourth coordinate. Each tree node has $\rho=\log^{\eps} n$ children; thus the tree height is $O(\log n/\log\log n)$. Let $S(u)$ denote the set of points assigned to a node $u$.  To simplify the notation, we will not distinguish between points in $S(u)$ and their projections onto $(x,y,z)$-space. We will say for example that a point $p$ is in a range $Q=\halfr{a}\times\halfr{b}\times\halfr{c}$ if the projection of $p$ onto $(x,y,z)$-space is in $Q$.

Since we aim for a linear-space data structure, we cannot store sets $S(u)$ in the nodes of the range tree.  We keep a $t_0$-shallow cutting $\cC(u)$ of $S(u)$ where $t_0=\log^2 n$. For every cell $C_i(u)$ of the shallow cutting  we store all points from $S(u)\cap C_i$ in a data structure supporting three-dimensional dominance queries.  We do not store the original (real) coordinates of points\footnote{To avoid clumsy notation, we will sometimes omit the node specification when the node is clear from the context. Thus we will sometimes write $C_i$ instead of $C_i(u)$ and $\cC$ instead of $\cC(u)$. The same simplification will be used for other shallow cuttings.} in $C_i$. Instead we keep coordinates in the rank space of $S(u)\cap C_i$. Since $S(u)\cap C_i$ contains $O(\log^2 n)$ points, we need only $O(\log \log n)$ bits per point to answer three-dimensional dominance queries in the rank space. 

We can answer a 5-sided query $\halfr{q_x}\times \halfr{q_y}\times\halfr{q_z}\times [z'_l,z'_r]$ by visiting all canonical nodes that cover the range $[z'_l,z'_r]$. In every visited node we answer a three-dimensional dominance query, i.e., report all points dominated by $q_3=(q_x,q_y,q_z)$ in two steps: first, we search for some cell $C_i(u)$ that contains $q_3$. If such a cell $C_i(u)$ exists, then we answer the dominance query in the rank space of $S(u)\cap C_i(u)$ in $O(1)$ time per point. We describe the data structure for dominance queries on $t_0$ rank-reduced points in Section~\ref{sec:small}. 

We must  address several issues in order to obtain a working solution: How can we transform a three-dimensional query to the rank space of $C_i(u)$? A data structure for cell $C_i(u)$ reports points in the rank space of $C_i(u)$. How can we obtain the original point coordinates? Finally  how do we answer a query on $S(u)$ if $q_3$ is not contained in any cell $C_i(u)$? First, we will explain how to decode points from a cell $C(u)$ and obtain their original coordinates. We also show how to transform a query to the rank space of a cell.  Next we will describe a complete procedure for answering a query. Finally we will improve the query time and achieve  the main result of this section. 

\subparagraph*{Decoding Points.}
This is the crucial component of our construction. We will need additional structures to obtain the original coordinates of points from $\cC(u)$.  To this end  we keep an  additional $(4t_0$)-shallow cutting $\cC'(u)$ in every node of the range tree.  For each cell $C'_i(u)$ of $\cC'(u)$  we create a separate $(2t_0)$-shallow cutting  of $S(u)\cap C'_i(u)$, called $\cD_i(u)$. 

\begin{lemma}
\label{lemma:contain}
Let $\cA$ be an $f$-shallow cutting for a set $S$ and let $\cB$ be an $(f')$-shallow cutting for a set $S'\subseteq S$ so that $f'\ge 2f$. Every cell $A_i$ of $\cA$ is contained in some cell $B_j$ of $\cB$.     
\end{lemma}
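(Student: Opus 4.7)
The plan is to exploit the fact that cells are downward-closed dominance regions, together with transitivity of dominance. Let $a_i$ denote the apex of $A_i$, so $A_i = \{\,p\in\R^d \mid p\le a_i\,\}$ (coordinatewise). The strategy is to locate $a_i$ itself inside some cell $B_j$ of $\cB$ (with apex $b_j$), and then observe that any point dominated by $a_i$ is also dominated by $b_j$, so $A_i\subseteq B_j$.

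First I would note that $a_i$ belongs to its own cell $A_i$, since $a_i\le a_i$. Hence by property (ii) of the $f$-shallow cutting $\cA$ applied to $S$, the level of $a_i$ in $S$ is at most $2f$. Second, since $S'\subseteq S$, the level of $a_i$ in $S'$ is at most its level in $S$, and therefore at most $2f\le f'$. Third, by property (i) of the $f'$-shallow cutting $\cB$ of $S'$, every point of $\R^d$ whose level in $S'$ is at most $f'$ is contained in some cell of $\cB$; applying this to $a_i$, we obtain a cell $B_j\in\cB$ with apex $b_j$ such that $a_i\in B_j$, i.e.\ $a_i\le b_j$.

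Finally, for any $p\in A_i$ we have $p\le a_i\le b_j$, so $p\in B_j$ and hence $A_i\subseteq B_j$. There is no real obstacle here: the only thing to check carefully is that the definitions of cell, apex, and level are used consistently (in particular that the apex of a cell always lies in the cell, so property (ii) bounds the apex's level), and that the levels transfer correctly when we pass from the larger set $S$ to the smaller set $S'$. The inequality $f'\ge 2f$ is used exactly once, to make the bound on $a_i$'s level in $S$ compatible with the shallow-cutting guarantee for $\cB$.
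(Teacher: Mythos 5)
Your proof is correct and follows essentially the same route as the paper's: bound the level of the apex $a_i$ in $S$ by $2f$, transfer this bound to $S'\subseteq S$, use $2f\le f'$ to place $a_i$ in some cell $B_j$ of $\cB$, and conclude by transitivity of dominance. Your write-up is slightly more explicit than the paper's (in particular in spelling out that $a_i\in A_i$ triggers property (ii)), but there is no substantive difference.
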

\begin{proof}
  Consider an apex point $a_i$ of $A_i$ (i.e., the point with maximum $x$- $y$-, and $z$-coordinates in $A_i$). The level of $a_i$ in $S$ is at most $2f$ by definition of a shallow cutting. 
Since $S'\subseteq S$, the level of $a_i$ in $S'$ does not exceed $2f$. Hence there exists a cell $B_j$ of $\cB$ that contains $a_i$. The apex $b_j$  of $B_j$ dominates $a_i$. Hence $b_j$ also dominates all points from $A_i$ and $B_j$ contains $A_i$. 
\end{proof}
Lemma~\ref{lemma:contain} will be extensively used in our decoding procedure.
We will say that a point $p$ in $C'_i(u)$ is \emph{interesting} if  $p$ is contained in some $C_k(w)$, where $w$ is an ancestor of $u$. Each interesting point $p\in S(u)$  can be uniquely represented by  (a) a cell $C'_i(u)$ of $\cC'(u)$ that contains $p$ and (b) coordinates of $p$ in the rank space of $C'_i(u)$. 
The following relationship between shallow cuttings provides the  key insight. 



\begin{lemma}
  \label{lemma:cuttings}
(i) Every cell $C_i(u)$ of $\cC(u)$ is contained in some cell $C'_j(u)$ of $\cC'(u)$ \\
(ii) Let $u_r$ be a child of an internal node $u$. Every cell $D_{ij}(u)$ of every $\cD_i(u)$ is contained in some cell $C'_k(u_r)$ of $\cC'(u_r)$. \\
(iii)   Every interesting point from $C'_i(u)$  is stored in some cell $D_{ij}(u)$ of $\cD_i(u)$.     
\end{lemma}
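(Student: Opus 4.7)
Part~(i) should follow immediately from Lemma~\ref{lemma:contain} applied with $S = S' = S(u)$, $\cA = \cC(u)$ (so $f = t_0$), and $\cB = \cC'(u)$ (so $f' = 4t_0$), since the required inequality $f' \ge 2f$ holds.

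For part~(iii), my plan is to show that every interesting $p \in S(u) \cap C'_i(u)$ already has level at most $2t_0$ in $S(u) \cap C'_i(u)$; the $2t_0$-shallow cutting $\cD_i(u)$ then automatically places $p$ inside one of its cells. By the definition of ``interesting'' there is an ancestor $w$ of $u$ and a cell $C_k(w)\in \cC(w)$, with apex $c_k$, that contains $p$. Since $\cC(w)$ is a $t_0$-shallow cutting, $c_k$ has level at most $2t_0$ in $S(w)$; and since $c_k$ dominates $p$, every point dominated by $p$ is also dominated by $c_k$, so $p$ itself has level at most $2t_0$ in $S(w)$. The level can only shrink when we restrict to the subset $S(u)\cap C'_i(u) \subseteq S(u) \subseteq S(w)$, giving the desired bound.

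Part~(ii) is the most delicate, because neither $S(u_r) \subseteq S(u)\cap C'_i(u)$ nor the reverse inclusion holds, so Lemma~\ref{lemma:contain} does not apply as a black box; this will be the main obstacle. The plan is to bound the level in $S(u_r)$ of each apex $d_{ij}$ of a cell of $\cD_i(u)$. For this I first need $d_{ij}$ to lie inside $C'_i(u)$, which can be arranged either by choosing a shallow-cutting construction whose apexes are points of the underlying set $S(u)\cap C'_i(u) \subseteq C'_i(u)$, or by truncating $d_{ij}$ componentwise against the apex $c'_i$ of $C'_i(u)$; such truncation does not remove any point of $S(u)\cap C'_i(u)$ from the cell, because every such point is already dominated by $c'_i$. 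Once $d_{ij}\in C'_i(u)$, any $q\in S(u_r)$ dominated by $d_{ij}$ is also dominated by $c'_i$ by transitivity, so $q\in C'_i(u)$; combined with $S(u_r)\subseteq S(u)$ this places $q$ in $S(u)\cap C'_i(u)$. Consequently the level of $d_{ij}$ in $S(u_r)$ is at most its level in $S(u)\cap C'_i(u)$, which is at most $4t_0$ since $\cD_i(u)$ is a $2t_0$-shallow cutting. Therefore $d_{ij}$ is contained in some cell $C'_k(u_r)$ of the $4t_0$-shallow cutting $\cC'(u_r)$, whose apex dominates $d_{ij}$ and hence every point of $D_{ij}(u)$. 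The heart of the argument is exactly this use of the fact that cells of $\cC'(u)$ are downward-closed dominance regions to transport a level bound across two unrelated sets via the apex $c'_i$.
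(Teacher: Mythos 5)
Your proof is correct and follows the same route as the paper for all three parts. Part (i) is identical (cite Lemma~\ref{lemma:contain}). Part (iii) is essentially the paper's argument; your detour through the apex $c_k$ is unnecessary—condition (ii) in the paper's definition of a shallow cutting already says that any point lying in a cell of $\cC(w)$ has level at most $2t_0$ in $S(w)$—but it is harmless. For part (ii) you use the same strategy as the paper (bound the level of the apex $d_{ij}$ in $S(u_r)$, then invoke the coverage property of the $4t_0$-shallow cutting $\cC'(u_r)$), but you explicitly address a point the paper glosses over: the paper simply asserts that $d_{ij}$ dominates at most $4t_0$ points of $S(u)$, whereas the shallow-cutting guarantee only bounds $d_{ij}$'s level in $S(u)\cap C'_i(u)$, and transferring that bound to all of $S(u)$ (and then to $S(u_r)$) requires knowing $d_{ij}\in C'_i(u)$ so that transitivity through $c'_i$ applies. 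Your fix—truncate $d_{ij}$ componentwise against $c'_i$, noting this keeps every point of $S(u)\cap C'_i(u)$ in the cell—is exactly the right repair and leaves everything the lemma is subsequently used for unchanged, since the data structure only ever stores points of $S(u)\cap C'_i(u)$ inside these cells.
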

\begin{proof}
(i) Immediately follows from Lemma~\ref{lemma:contain}.
(ii) Consider the apex point $p_a$  of $D_{ij}(u)$. 
The point $p_a$ dominates at most $4t_0$ points from $S(u)$ and at most $4t_0$ points from $S(u_r)$ because $S(u_r)\subset S(u)$. Hence both $p_a$ and $D_{ij}(u)$ are contained in some cell of $C'(u_r)$.
(iii) Suppose that a point $p\in S(w)$ is stored in some cell $C_k(w)$ of  $\cC(w)$ where $w$ is an ancestor of $u$. The point $p$ dominates at most $2t_0$ points from $S(w)$. 
Since $S(u)$ is a subset of $S(w)$, $p$ dominates at most $2t_0$ points in $S(u)$. Hence $p$ is contained in some cell $C'_i(u)$ of the shallow cutting  $\cC'(u)$.  Every point of $C'_i(u)\cap S(u)$ that dominates at most $2t_0$ points of $S(u)$ is contained in some cell $D_{ij}(u)$ of $\cD_{i}(u)$. 
\end{proof}

Consider an arbitrary point $p$ in a cell $C_i(u)$ of  $\cC(u)$. Our decoding procedure finds a representation of $p$ in $\cC'(u)$. That is, we find the cell $C'_j(u)$ of $\cC'(u)$, such that $p\in C'_j(u)$,   
and the rank of $p$ in $C'_j(u)$. The key observation is that $C_i(u)$ is contained in some $C'_j(u)$ (Lemma~\ref{lemma:cuttings}, item (i)) Therefore we need to store  a pointer to $C'_j(u)$ only once for all points $p$ in $C_i(u)$. For every $p$ from $C_i(u)$, we can store  its rank  in $C'_j(u)$ using $O(\log\log n)$ bits.   
Next, our decoding procedure  moves from a node $u$ to its child $u_f$, such that $p\in S(u_f)$, and computes a representation of $p$ in $\cC'(u_f)$. This is done in two steps: 
First we examine the shallow cutting $\cD_j(u)$ and find the cell $D_{jl}(u)$ that contains $p$. By Lemma~\ref{lemma:cuttings}, item (iii), such a cell always exists.  The shallow cutting $\cD_{jl}(u)$ consists of $O(1)$ cells. Therefore we can store, for any interesting point $p$,  the cell $D_{jl}(u)$ containing $p$ and the rank of $p$ in $D_{jl}(u)$ using $O(\log\log n)$ bits.
Then we move from $D_{jl}(u)$ to $\cC(u_f)$: by Lemma~\ref{lemma:cuttings}, item (ii), $D_{jl}(u)$ is contained in some $C'_k(u_f)$. Thus we need to store the pointer to $C'_k(u_f)$ only once for 
all interesting points $p$ in $D_{jl}(u)$. We can store the rank of $p$ in $C'_k(u_f)$ using $O(\log \log n)$ bits.  When the representation of $p$ in $\cC'(u_f)$ is known,  we move to the child $u_d$ of $u_f$, such that $p\in S(u_d)$ and compute a  representation of $p$ in $\cC'(u_d)$. We continue in the same way until a leaf node is reached. Every leaf node $\ell$ contains original (real) coordinates of points in $S(\ell)$. Hence we can obtain the coordinates of $p$ when a leaf is reached.  Summing up, shallow cuttings $\cC'(u)$ and $\cD_i(u)$ allow us  to move from a node $u$  to a  child of $u$ using only $O(\log \log n)$ additional bits per point. A more detailed description of auxiliary data structures needed for decoding  is given in the next paragraph.

For each cell $C_i(u)$ of $\cC(u)$  we keep a pointer to the cell $C'_{\cont(i)}(u)$ of $\cC'(u)$ that contains $C_i(u)$. For every cell $D_{ij}(u)\in \cD_{i}(u)$ and for each child $u_r$ of $u$, we keep a pointer to the cell $C_{\down(i,j,r)}'(u_r)\in \cC'(u_r)$, such that $C_{\down(i,j,r)}'(u_r)$ contains  $D_{ij}(u)$. We can identify a point in each cell of a shallow cutting $\cC(u)$ (resp. $\cC'(u)$ or $\cD'_i(u)$) with $O(\log\log n)$ bits because each cell contains a poly-logarithmic number of points. The $x$-rank of a point in a cell will be used as its identifier.  We keep a mapping from points in a cell $C_i$ to the corresponding points in a containing cell $C'_{\cont(i)}$. The array $F_X(C_i)$ maps $x$-ranks of points in $C_i$ to their $x$-ranks in $C'_{\cont(i)}$:  if the $x$-rank of a point $p\in C_i$ is equal to $f$, then $F_X[f]=g$ where $g$ is the $x$-rank of $p$ in $C'_{\cont(i)}$. The array $F''_{X,r}$ for a cell $D_{ij}(u)$  and a child $u_r$ of $u$ maps $x$-ranks of points in $D_{ij}(u)$ to their $x$-ranks in $C'_{\down(i,j,r)}$. If the $x$-rank of a point $p\in C_i$ is equal to $f$, then $F''_{X,r}[f]=g$ where $g$ is the $x$-rank of $p$ in $C'_{\down(i,j,r)}$.  We also keep a mapping from $C'_i(u)$ to cells of $\cD_i(u)$:   for every point $p\in C_i(u)$ we store the cell $D_{ij}$ that contains $p$  and the $x$-rank of $p$ in $C_{ij}$ (or $NULL$ if $p$ is not in $\cC_{ij}$). For every point $p$ in each cell $D_{ij}(u)$ of $\cC'_i(u)$, we store the index of the child $u_r$ such that $p\in S(u_r)$. Our method requires $O(\log\log n)$ bits per point. Each pointer from $C_i(u)$ to $C'_{\cont(i)}(u)$ and from $D_{ij}(u)$ to $C'_{\down(i,j,r)}(u)$ consumes $O(\log n)$ bits. We store $O(\log^{2\eps}n)$ pointers per cell and there are $O(n/(\log n\log\log n))$ cells in all shallow cuttings of the range tree. Hence the total space used by all pointers is $O(n\log^{2\eps}n)$ bits.

\begin{lemma}
\label{lemma:down}
  For any interesting point $p$ in a cell $C'_i(u)$, we can  find the representation of $p$ in $C'_i(u_f)$, where $u_f$ is the child of $u$ that contains $p$. 
\end{lemma}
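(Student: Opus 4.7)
The plan is to relay the representation of $p$ through the intermediate cutting $\cD_i(u)$, using the pointers and rank arrays already set up in the paragraph preceding the lemma. The input is the pair consisting of the index $i$ (identifying $C'_i(u)$) and the $x$-rank of $p$ inside $C'_i(u)$; the goal is to produce an analogous pair describing the cell $C'_k(u_f)$ of $\cC'(u_f)$ that contains $p$ together with the $x$-rank of $p$ in that cell.

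First I would consult the mapping stored at $C'_i(u)$ which, for each interesting point, records the index $j$ of the cell $D_{ij}(u)\in \cD_i(u)$ containing it and the $x$-rank of that point within $D_{ij}(u)$. Lemma~\ref{lemma:cuttings}(iii) guarantees that $p$ lies in some $D_{ij}(u)$, so the lookup is well defined; both fields fit in $O(\log\log n)$ bits because $\cD_i(u)$ has $O(1)$ cells and each cell holds $O(\log^2 n)$ points. Second, using the internal rank of $p$ inside $D_{ij}(u)$, I would read off the stored child index $r$ such that $p\in S(u_r)$, thereby pinpointing $u_f=u_r$. Third, I would follow the precomputed pointer $\down(i,j,r)$ to the cell $C'_{\down(i,j,r)}(u_f)$ of $\cC'(u_f)$ that contains $D_{ij}(u)$; existence of this cell is exactly Lemma~\ref{lemma:cuttings}(ii). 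Finally, applying the array $F''_{X,r}$ attached to $D_{ij}(u)$ to the $x$-rank of $p$ inside $D_{ij}(u)$ yields the $x$-rank of $p$ inside $C'_{\down(i,j,r)}(u_f)$, producing the desired output pair $(\down(i,j,r),\,F''_{X,r}[\cdot])$.

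The main obstacle is less the correctness of the individual steps, which is essentially dictated by Lemma~\ref{lemma:cuttings}, than verifying that every piece of information accessed above is actually present in the data structure and retrievable in constant time while respecting the $O(\log\log n)$-bits-per-point budget advertised earlier in the section. The per-point data consumed in the plan — the cell index $j$, the internal rank in $D_{ij}(u)$, the child index $r$, and a single entry of $F''_{X,r}$ — each occupy only $O(\log\log n)$ bits; the heavier $O(\log n)$-bit pointers $\cont$ and $\down$ are stored once per cell rather than once per point, so they amortize against the $\Theta(\log^2 n)$ points in each cell and do not inflate the per-point cost.
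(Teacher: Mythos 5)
Your proof is correct and follows essentially the same route as the paper's: locate $p$'s cell $D_{ij}(u)$ in $\cD_i(u)$ (existence by Lemma~\ref{lemma:cuttings}(iii)), then follow the $\down(i,j,r)$ pointer and apply $F''_{X,r}$ to recover $p$'s $x$-rank in the containing cell of $\cC'(u_f)$. You simply spell out a few details the paper leaves implicit, such as how $u_f$ is identified from the stored child index and the per-point bit budget, but the argument is the same.
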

\begin{proof}
  First we identify the cell $D_{ij}(u)$ of $\cD_i(u)$ that contains $p$ and compute the $x$-rank of $p$ in $D_{ij}(u)$. Since $p$ is interesting, such a cell exists. Then we use the array $F''_{X,k}$ of this cell and find the $x$-rank of $p$ in the cell $C'_{\down(i,j,k)}$. 
\end{proof}

For any point from $C_i(u)$ we can obtain its position in some cell $C'_{\cont(i)}(u)$ in $O(1)$ time. Then we can move down and obtain its representation in a child of $u$ in $O(1)$ time. We can access the original coordinates of $p$  when a leaf node is reached. Thus we can ``decode'' a point $p$ if we know its position in a cell $C_i(u)$ in $O(\log n/\log\log n)$ time. 

We can reduce a three-dimensional query $\halfr{a}\times\halfr{b}\times\halfr{c}$ to the rank space of a cell by binary search. Let $X(C_i)$ denote the list of points  in a cell $C_i$ sorted by $x$-coordinates. To compare $a$ with the $x$-coordinate of $X(C_i)[g]$ for some index $g$, we decode the point $p=X(C_i)[g]$ as explained above. Hence we can find the predecessor of $a$ in $X(C_i)$ by binary search in $O(\log \log n)$ time. We can find the predecessor of $b$ in $Y(C_i)$ and the predecessor of $c$ in $Z(C_i)$ using the same procedure, where $Y(C_i)$ and $Z(C_i)$ are the lists of points in $C_i$ sorted by their $y$- and $z$-coordinates respectively.

\subparagraph*{Queries.}
Consider a four-dimensional 5-sided query $\halfr{a}\times\halfr{b}\times\halfr{c}\times [z'_l,z'_r]$. We visit all canonical nodes that cover the range $[z'_l,z'_r]$. In every visited  node we answer a three-dimensional query using the following procedure. We find a cell $C_i(u)$ that contains $p$. We transform $\halfr{a}\times \halfr{b}\times \halfr{c}$ to the rank space of $C_i(u)$ and answer the transformed query on $C_i(u)\cap S(u)$. Every reported point is decoded using the procedure described  above.  If there is no cell $C_i(u)$ that contains $p$, then $p$ dominates at least $t_0$ points from $S(u)$. In this case we visit all children of $u$ and recursively answer three-dimensional dominance query in each child using the same procedure. 

We need $O(\log \log n)$ time to find the cell $C_i(u)$ or determine that $C_i(u)$ does not exist. To answer a query on $C_i(u)$ we need $O(\log n)$ time (ignoring the time to report points, but taking into account the time that we need to transform a query to the rank space of $C_i(u)$). Thus the total time spent in a node $u$ is $O(\log n)$. The time spent in descendants of $u$ can be estimated as follows. Let $T_u$ be the subtree of the range tree induced by $u$ and its visited descendants. Let $T_u'$ denote the subtree of $T_u$ obtained by removing all leaves of $T_u$. Every leaf of $T'_u$ is an internal node of $T_u$. Hence we report at least $t_0$ points for every leaf in $T'_u$.  The height of $T'_u$ is bounded 
by $O(\log n/\log\log n)$. Let $l_u$ denote the number of leaves in $T'_u$. The total number of nodes in $T'_u$ is bounded by $O(l_u\log n/\log \log n)$. Every node of $T'_u$ has $\rho$ children.  Hence the total number of nodes in $T_u$ does not exceed $O(l_u(\log^{1+\eps}n/\log\log n))$. The time spent in all nodes of $T_u$ can be bounded by $O(l_u\log^{2+\eps} n)$ (again, ignoring the time to decode and report points). When we visit descendants of $u$ we report at least $k_u=\Omega(l_u\cdot t_0)$ points and each point is decoded in $O(\log n/\log\log n)$ time. The total time spent in descendants of $u$ is $O(l_u\log^{2+\eps}n+  k_u(\log n/\log\log n)=O(k_u(\log n/\log\log n))$. The time spent in all canonical nodes and their descendants is $O(\log^{2+\eps}n +k(\log n/\log\log n))$.


\subparagraph*{Faster Decoding.}
We can  speed-up the decoding procedure and thus the overall query time without increasing the asymptotic space usage. Our approach is very similar to the method used in compact two-dimensional range trees~\cite{Chazelle88,Nekrich09,ChanLP11}.  All nodes in the range tree are classified according to their depth. A node $u$ is an $i$-node if the depth $h_u$ of $u$ divides $\rho^i$ where $\rho=\log^{\eps} n$, $h_u=x\cdot \rho^i$ for some $i\ge 0$, but $h_u$ does not divide $\rho^j$ for $j>i$. 
We keep an additional $4t_i$-shallow cutting $\cC^i$ in every $i$-node $u$ where $t_i=\rho^i\cdot \log^2 n$. As before for each cell $C^j_i$ of $\cC^j$ we construct a $2t_j$-shallow cutting $\cD^j_{i}$. Let an $i$-descendant of a node $u$ denote the highest $i$-node $v$ that is a descendant of $u$. If a node $u$ is an $i$-node, then it has $\rho^i$ $i$-descendants. For every cell $D^j_{ik}$ 
of each $\cD^j_i$ and for every $j$-descendant $u_l$ of $u$, we keep the index $r=\down(j,i,l)$ of the cell $C^j_r(u_l)$ that contains $D^j_{ik}(u)$. For each point in $D^j_{ik}(u)$ we keep the index $l$ of the $i$-descendant that contains $p$ and the $x$-rank of $p$ in $C^j_r(u_l)$ where $r=\down(j,i,l)$. 

Using these additional shallow cuttings, we can reduce the decoding time to $O(\log^{\eps} n)$. To decode a point $p$ in $S(u)$ we move down from a node $u$ to its child $u_{0,1}$  and find a representation of $p$ in $\cC'(u_{0,1})$. Then we move to the child $u_{0,2}$ of $u_{0,1}$ and continue in the same manner until a $1$-node $u_{1,1}$ is reached. Next we move from $u_{1,1}$ to its $1$-descendant $u_{1,2}$, then to a $1$-descendant $u_{1,3}$ of $u_{1,2}$, and so on until a $2$-node is reached. During the $j$-th iteration we move down along  a sequence of $j$-nodes until a $(j+1)$-node  or a leaf node is reached. During each iteration we visit $O(\log^{\eps} n)$ nodes and spend $O(1)$ time in every node. There are at most $\log_{\rho}\log n=O(1/\eps)$ iterations.  Hence the decoding time for a point is $O(\log^{\eps} n)$. The total query time is reduced to $O(\log^{1+2\eps}n+k\log^{\eps}n)$. If we replace $\eps$ with $\eps/2$ in the above proof, we obtain our first result.
\begin{theorem}
  \label{theor:linspace}
There exists a linear-space data structure that answers four-dimensional 5-sided reporting queries in $O(\log^{1+\eps} n+ k\log^{\eps} n)$ time and four-dimensional 5-sided emptiness queries in  $O(\log^{1+\eps} n)$ time.
\end{theorem}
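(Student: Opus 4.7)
The plan is to assemble the full data structure described in the \emph{Faster Decoding} paragraph and verify three separate claims: linear space, the reporting-time bound, and the emptiness-time bound. The structure is the range tree of degree $\rho=\log^\eps n$ on $z'$, with each node carrying the base cuttings $\cC(u)$, $\cC'(u)$, $\cD_i(u)$, together with the deeper cuttings $\cC^j(u)$ and $\cD^j_i(u)$ attached to $j$-nodes, the pointer arrays $\cont$, $\down$, $F_X$, $F''_{X,r}$ lifted to every level, and the original coordinates stored at the leaves.

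For the space bound I would argue level by level. A $t_j$-shallow cutting inside an $i$-node $u$ (with $i\le j$) consists of $O(|S(u)|/t_j)$ cells containing $O(t_j)$ rank-reduced points apiece, each at $O(\log\log n)$ bits, so the total point-data contribution summed over all nodes of the range tree is $O(n\log n)$ bits, i.e.\ $O(n)$ words. Cross-pointers contribute $O(n\log^{2\eps}n)$ bits as already shown, the planar point-location structures per cutting are linear in the number of cells, and the leaf contents add $O(n)$ words. The only subtlety is to observe that the number of $j$-nodes shrinks by $\rho^j$ while the per-cell capacity $t_j$ grows by $\rho^j$, so that the contribution of each level telescopes to the same asymptotic bound as the base level.

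For the reporting time I would re-run the query argument of the base section with the faster decoding cost substituted throughout. At each of the $O(\log^{1+\eps}n/\log\log n)$ canonical nodes we spend $O(\log\log n)$ on point-location in $\cC(u)$ plus $O(\log\log n)$ rank-predecessor searches, each costing one $O(\log^\eps n)$ decode, so the node-processing cost is $O(\log^\eps n\log\log n)$ and the total over canonical nodes is $O(\log^{1+2\eps}n)$. Reporting adds $O(\log^\eps n)$ per output. When $q_3$ misses every cell of $\cC(u)$ we descend into all $\rho$ children, but then $q_3$ dominates at least $t_0=\log^2n$ points of $S(u)$, which pays for the induced subtree via the $T_u$ vs.\ $T'_u$ decomposition of the excerpt. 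Rescaling $\eps\to\eps/2$ yields the claimed $O(\log^{1+\eps}n+k\log^\eps n)$.

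For emptiness no point is ever decoded. At a canonical node $u$, if $q_3$ misses every cell of $\cC(u)$ then $\ge t_0\ge 1$ points of $S(u)$ are dominated and the answer is immediately ``non-empty''; otherwise a single emptiness check in the rank space of the unique containing cell $C_i(u)$ decides the node. The per-node cost stays $O(\log^\eps n\log\log n)$, giving $O(\log^{1+\eps}n)$ after the same rescaling. The main obstacle I foresee is keeping the charging scheme tight in the reporting case: the processing of children when $q_3$ misses every cell of $\cC(u)$ must be charged to the $\Omega(t_0)$ points reported below, exactly as in the excerpt, but with the smaller per-node cost substituted so that it is dominated by the $k\log^\eps n$ output term rather than added to it.
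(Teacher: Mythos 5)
Your proposal reconstructs the paper's own proof of Theorem~\ref{theor:linspace}: range tree of degree $\rho=\log^{\eps}n$ on $z'$, a $t_0$-shallow cutting $\cC(u)$ with rank-reduced points per node, the $\cC'$/$\cD_i$ decoding machinery and its $i$-node hierarchy for $O(\log^{\eps}n)$-time decoding, per-node cost $O(\log^{\eps}n\log\log n)$ for rank-space translation, the $T_u$ versus $T'_u$ charging for missed cells, and the final $\eps\to\eps/2$ rescaling, exactly as in Section~\ref{sec:5sid4dim}. The only slip is cosmetic: the space paragraph refers to ``a $t_j$-shallow cutting inside an $i$-node $u$ (with $i\le j$)'', whereas the paper equips an $i$-node with cuttings at scale $t_i$ only (namely $\cC$ at $t_0$, $\cC^i$ at $4t_i$, and $\cD^i$ at $2t_i$); the telescoping conclusion $O(n\log n)$ bits is nonetheless correct because the aggregate size of all $S(u)$ over $i$-nodes decays geometrically in $i$ while per-point storage stays $O(\log\log n)$ bits.
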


\section{Faster Queries using More Space}
\label{sec:decodingfast}
In this section we will show how to reduce the decoding time to $O(1)$ per point by increasing the space usage. We make several modifications in the basic construction of Section~\ref{sec:5sid4dim}. 

For any $i$ and $j$ such that $1\le i\le j \le \rho$ and for any internal node $u$ of the range tree, we store the set $S(u,i,j)$. $S(u,i,j)$  is the union of sets $S(u_i)$, $S(u_{i+1})$, $\ldots$, $S(u_j)$. For every set $S(u,i,j)$ we construct a $t_0$-shallow cutting $\cC(u,i,j)$. 
For each cell $C_l$ of $\cC(u,i,j)$ we store a three-dimensional data structure that keeps points from $C_l\cap S(u,i,j)$ in the rank space and answers three-dimensional dominance queries in $O(k+1)$ time. 

The decoding procedure is implemented in the same way as in Section~\ref{sec:5sid4dim}, but with different parameter values. Recall that a node $u$ is an $i$-node for some $i\ge 0$ if the depth of $u$ divides $\rho^i$ but does not divide $\rho^{i+1}$. We keep an
additional $4t_{i+1}$-shallow cutting $\cC^i(u,l,r)$ for every $i$-node $u$ and every pair $1\le l\le r\le \rho$. For every cell  $C_s$ of $\cC^i(u,l,r)$ we keep a $2t_{i+1}$-shallow cutting $\cD_s$. Consider a cell $D_g$ of $\cD_s$.  For every $(i+1)$-descendant $v$ of $u$ and for every pair $l$, $r$ satisfying $1\le l\le r\le\rho$, we keep the index $x=\down(D_g,v,l,r)$  such that the cell $C_x$ of $\cC^{i+1}(v,l,r)$ contains $D_g$.  We also store a mapping from $\cC(u,l,r)$ to $\cC^i(u,l,r)$ for every $i$-node $u$. That is, for every cell $C_f$ of $\cC(u,l,r)$ we keep the index $g=\cont(f)$, such that the cell $C_g\in \cC^i(u,l,r)$ contains $C_f$; for every  point $p\in S(u)\cap C_f$ we keep its identifier in $C_{\cont(f)}$. For every cell $C_g$ of $\cC^i(u,l,r)$ we keep a mapping from $C_g$ to $\cD_g$.  That is, for every point $p$ in $C_g\cap S(u,l,r)$ we store the cell $D_s$ of $\cD_g$ that contains $p$ and the identifier of $p$ in $D_s$. Finally we also store a mapping from every cell $D_s$ of each $\cD_g$ to shallow cuttings in $(i+1)$-descendants of $u$. For every point $p\in D_s\cap S(u,l,r)$ we store (i) the $i$-descendant $v$ of $u$ such that $p\in S(v)$ and (ii) the identifier of $p$ in $C_x$ where $x=\down(D_s,v,l,r)$.

Our modified data structure uses $O(n\log^{3\eps}n)$ words of space. The representation of a point in  $\cC(u,i,j)$ takes $O(\log \log n)$ bits per point  and every point is stored in $O(\log^{1+2\eps}n/\log \log n)$ shallow cuttings $\cC(u,i,j)$.  The mapping from $\cC(u,l,r)$ to $\cC^i(u,l,r)$ in an $i$-node $u$ takes $O(\log^{(i+1)\eps}n)$ bits per point.  We also need $O(\log^{(i+1)\eps}n)$ bits per point to store the mapping from a cell $C_g$ of $\cC^i(u,l,r)$ to $\cD_g$. The mapping from 
a cell $D_s$ of $\cD_g$ to shallow cuttings in $(i+1)$-descendants of $u$ consumes the same space. The total number of points in all $S(u)$ where $u$ is an $i$-node is  $O(n\log^{1-i\eps}n)$. The total number of points in all $S(u,l,r)$ where $u$ is an $i$-node and $1\le l\le r\le \rho$ is  $O(n\log^{1+(2-i)\eps}n)$.  Hence the total space used by all mappings in all $i$-nodes is $O(n\log^{1+3\eps} n)$ bits or $O(n\log^{3\eps}n)$ words of $\log n$ bits.   

Every point  $p$ in $C_i\cap S(u,l,r)$, where $C_i$ is a cell of $\cC(u,l,r)$,  can be decoded in $O(1)$ time. Suppose that $u$ is a $j$-node. Using the mapping from $C_i$ to  $\cC^j(u,l,r)$, we can find the representation of $p$ in $\cC^j(u,l,r)$, i.e., a cell $C_s$ that contains $p$ and the identifier of $p$ in $C_s$. If we know the identifier of $p$ in $C_s$, we can find the representation of $p$ in $\cD_s$. Using the mapping from a cell of $\cD_s$ to $(j+1)$-descendants of $u$, we can compute  the representation of $p$ in a cell $C_v$ of $\cC^{j+1}(v,l',r')$, where $v$ is a direct $(j+1)$-descendant of $u$. Thus we moved from a $j$-node to its $(j+1)$-descendant in $O(1)$ time. We continue in the same way and move to  a $(j+2)$-descendant of $u$, then a $(j+3)$-descendant of $u$, and so on. After at most $(1/\eps)=O(1)$ iterations, we reach a leaf node and obtain the original coordinates of $p$.  

We can   translate a query $\halfr{a}\times\halfr{b}\times\halfr{c}$ into the rank space of a cell $C_i$ in constant time. 
Let $X(C_i)$ denote the list of $x$-coordinates of $S(u,l,r)\cap C_i$. We keep $X(C_i)$ in the compact trie data structure of 
~\cite{GrossiORR09}. This data structure requires $O(\log\log n)$ bits per point.  Elements of $X(C_i)$ are not stored in the compact trie; we only store some auxiliary information using $O(\log \log n)$ bits per element. Compact trie supports predecessor queries on $X(C_i)$ in $O(1)$ time, but the search procedure must access $O(1)$ elements  of $X(C_i)$.  Since we can decode a point from $C_i$ in $O(1)$ time, we can also access an element of $X(C_i)$ in $O(1)$ time. Hence, we can compute the predecessor of $a$ in $X(C_i)$ (and its rank) in $O(1)$ time. We can translate $b$ and $c$ to the rank space in the same way.

\subparagraph*{Queries.}
Consider a four-dimensional 5-sided query $\halfr{a}\times\halfr{b}\times\halfr{c}\times [z'_1,z'_2]$.  There are 
$O(\log n/\log\log n)$ canonical sets $S(u_i,l_i,r_i)$, such that $p.z\in [z'_1,z'_2]$ iff $p\in S(u_i,l_i,r_i)$ for some $i$. Canonical sets can be found as follows. Let $\ell_1$ be the leaf that holds the largest $l_1 < z_1'$ and $\ell_2$ be the leaf that holds the smallest $l_2>z'_2$. Let $v$ denote the lowest common ancestor of $\ell_1$ and $\ell_2$. Let $\pi_1$ denote the path from $\ell_1$ to $v$ (excluding $v$) and let $\pi_2$ denote the path from $\ell_2$ to $v$ (excluding $v$). For each node $u\in \pi_2$, we consider a canonical set $S(u,l,r)$ such that $u_l$, $\ldots$, $u_r$ are left siblings of some node $u_{r+1}\in \pi_2$. For each node $u\in \pi_1$,  we consider a canonical set $S(u,l,r)$ such that $u_l$, $\ldots$, $u_r$ are right siblings of some node $u_{l-1}\in \pi_1$. Finally we consider the set $S(v,l,r)$ such that $v_l$, $\ldots$, $v_r$ have a left sibling on $\pi_1$ and a right sibling on $\pi_2$. The fourth coordinate of a point $p$ is in the interval $[z'_1,z'_2]$ iff $p$ is stored in one of the canonical sets described above. Hence we need to visit all canonical sets and answer a three-dimensional query $\halfr{a}\times\halfr{b}\times\halfr{c}$ in each set.

There are $O(\log n/\log \log n)$ canonical sets $S(u,l,r)$. Each canonical set is processed as follows. We find the cell $C_u$ of $\cC(u,l,r)$ that contains $q_3=(a,b,c)$. Then we translate $q_3$ into the rank space of $C_u\cap S(u,l,r)$ and answer the dominance query. Reported points are decoded in $O(1)$ time per point as explained above. We can also translate the query into the rank space of $C_u$ in $O(1)$ time.  If $q_3$ is not contained in any cell of $\cC(u,l,r)$, then $q_3$ dominates at least $\log^2 n$ points of $S(u,l,r)$.  We visit all children $u_i$ of $u$ for $l\le i \le r$ and recursively answer the dominance query in each child. Using the same arguments as in Section~\ref{sec:5sid4dim}, we can show that the total number of visited nodes does not exceed $O(k/\log^{\eps} n)$, where $k$ is the number of reported points. 

If we replace $\eps$ with $\eps/3$ in the above proof, we obtain the following result. 
\begin{theorem} 
  \label{lemma:faster}
There exists an $O(n\log^{\eps}n)$ space data structure that answers four-dimensional 5-sided reporting queries in $O(\log n+ k)$ time and four-dimensional 5-sided emptiness queries in  $O(\log n)$ time.
\end{theorem}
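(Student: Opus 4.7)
The plan is to adapt the linear-space construction of Section~\ref{sec:5sid4dim} so that moving one level deeper in the decoding walk skips $\rho$ consecutive nodes instead of one, thereby reducing the decoding cost to $O(1)$ while inflating the space only by a factor of $\log^{\eps} n$. The two structural changes are (i) storing, for every internal node $u$ and every $1 \le i \le j \le \rho$, a canonical set $S(u,i,j) = S(u_i) \cup \ldots \cup S(u_j)$ together with a $t_0$-shallow cutting $\cC(u,i,j)$; and (ii) at each $i$-node $u$ and each such pair $(l,r)$, storing an auxiliary $4t_{i+1}$-shallow cutting $\cC^i(u,l,r)$ and, inside each of its cells $C_s$, a $2t_{i+1}$-shallow cutting $\cD_s$.

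First I would verify the decoding invariant by invoking Lemma~\ref{lemma:contain}: because $t_{i+1} = \rho \cdot t_i$, each cell of $\cC(u,l,r)$ sits in some cell of $\cC^i(u,l,r)$; each cell $D_g$ of $\cD_s$ sits in some cell of $\cC^{i+1}(v,l',r')$ for every $(i+1)$-descendant $v$ of $u$; and every interesting point of a cell $C_s \in \cC^i(u,l,r)$ is captured by some $D_g \in \cD_s$. These three containments are the exact analogues of Lemma~\ref{lemma:cuttings} and let us precompute the pointers $\cont$ and $\down$ and the mapping from points of $C_g$ into $\cD_g$. Using them we can step from a representation of $p$ in $\cC(u,l,r)$, through $\cC^i(u,l,r)$ and $\cD_s$, directly to a representation of $p$ in $\cC^{i+1}(v,l',r')$ in $O(1)$ time. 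Since a point traverses at most $1/\eps = O(1)$ levels of $i$-nodes before reaching a leaf, the total decoding cost is $O(1)$.

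For the query analysis I would argue that the $O(\log n/\log\log n)$ canonical sets produced by the degree-$\rho$ range tree can be enumerated as in Section~\ref{sec:prelim} and, inside each, a 3D dominance query on the relevant cell of $\cC(u,l,r)$ is answered: locating the containing cell via the planar point-location structure of~\cite{Chan13} costs $O(\log\log n)$, translating $q_3$ to the cell's rank space uses the compact trie of~\cite{GrossiORR09} with $O(1)$ decoded comparisons per coordinate, and reporting each output point costs $O(1)$ thanks to the fast decoder. If no cell of $\cC(u,l,r)$ contains $q_3$, then $q_3$ dominates $\Omega(\log^2 n)$ points of $S(u,l,r)$, and the same recursive descent as in Section~\ref{sec:5sid4dim} amortizes the visited subtree to $O(k/\log^{\eps} n)$ extra nodes. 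Summing yields the total $O(\log n + k)$ bound.

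The main obstacle is the space accounting, because there are $\Theta(\rho^2) = \Theta(\log^{2\eps} n)$ canonical sets per node and each point is replicated in $O(\log n/\log\log n)$ of them. I would handle this by bounding bits per point at each level: the representation inside a leaf-level cutting $\cC(u,l,r)$ costs $O(\log\log n)$ bits per point, while the mapping pointers at an $i$-node contribute $O(\log^{(i+1)\eps} n)$ bits per point. Since the total number of points lying in all $i$-node canonical sets $S(u,l,r)$ is $O(n \log^{1+(2-i)\eps} n)$, each level contributes $O(n \log^{1+3\eps} n)$ bits, and summing over the $O(1/\eps)$ levels keeps the total at $O(n \log^{3\eps} n)$ words. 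Finally, replacing $\eps$ with $\eps/3$ throughout the construction yields the claimed $O(n \log^{\eps} n)$ space, completing the proof.
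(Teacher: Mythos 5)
Your proposal reproduces the paper's own construction essentially verbatim: the canonical sets $S(u,i,j)$ with $t_0$-shallow cuttings, the auxiliary $4t_{i+1}$-shallow cuttings $\cC^i(u,l,r)$ with nested $2t_{i+1}$-cuttings $\cD_s$ at $i$-nodes, the $O(1)$-time decoding that hops across $i$-levels, the compact-trie rank-space translation, the recursive descent bounded by $O(k/\log^\eps n)$ extra nodes, and the $O(n\log^{3\eps}n)$-word accounting followed by substituting $\eps/3$. This is the same proof, correctly worked out.
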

We can extend our result to support dominance queries (or any $(2d-3)$-sided queries) in $d\ge 4$ dimensions using standard techniques.
\begin{theorem}
  There exists an $O(n\log^{d-4+\eps}n)$ space data structure that supports $d$-dimensional dominance range reporting queries in 
  $O(\log^{d-3} n/(\log\log n)^{d-4}+k)$ time for any constant $d\ge 4$. \\
  There exists an $O(n\log^{d-4+\eps}n)$ space data structure that supports $d$-dimensional $(2d-3)$-sided range reporting queries in 
  $O(\log^{d-3} n/(\log\log n)^{d-4}+k)$ time for any constant $d\ge 4$. 
\end{theorem}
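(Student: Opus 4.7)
I would prove both statements together by induction on $d \ge 4$. The base case $d=4$ is precisely Theorem~\ref{lemma:faster}. A $d$-dimensional dominance query is $d$-sided, and since $d \le 2d-3$ for $d \ge 3$, every dominance query is a special case of a $(2d-3)$-sided query (pad the missing sides with $-\infty$), so it suffices to prove the $(2d-3)$-sided statement.

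\textbf{Inductive step.} A $(2d-3)$-sided query in dimension $d$ constrains $d-3$ coordinates on both sides and $3$ on one side (solve $a + b = d$, $2a + b = 2d-3$), so for $d \ge 5$ there is at least one two-sided coordinate; pick one, call it $w$, and build a range tree on $w$ with fanout $\rho = \log^{\eps} n$, exactly as in Section~\ref{sec:decodingfast}. For every internal node $u$ and every pair $1 \le l \le r \le \rho$ form the canonical set $S(u,l,r) = S(u_l) \cup \cdots \cup S(u_r)$, and on the projection of $S(u,l,r)$ that drops coordinate $w$ attach, recursively, the $(d-1)$-dimensional $(2(d-1)-3)$-sided structure given by the inductive hypothesis; the residual query after removing the two $w$-bounds has $(2d-3)-2 = 2(d-1)-3$ sides, so the hypothesis applies.

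\textbf{Query and bounds.} To answer a query, decompose its $w$-interval into $O(\log n/\log\log n)$ canonical sets as in Section~\ref{sec:decodingfast}, and query each with its $(d-1)$-dim structure. By induction each such query takes $O(\log^{d-4} n/(\log\log n)^{d-5} + k_i)$ time, so summing over the $O(\log n/\log\log n)$ canonical sets gives $O(\log^{d-3} n/(\log\log n)^{d-4} + k)$, as claimed. For space, each point appears in $O(\rho^2 \cdot \log n/\log\log n) = O(\log^{1+2\eps} n / \log\log n)$ canonical sets, and on $m$ points the inductive structure uses $O(m \log^{d-5+\eps} m)$ space, giving $O(n \log^{d-4+3\eps} n/\log\log n)$ total; rescaling $\eps \mapsto \eps/3$ at every level (the loss is absorbed since $d$ is a fixed constant) delivers the claimed $O(n \log^{d-4+\eps} n)$ bound.

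\textbf{Main obstacle.} The one spot that needs real care is the preservation of the output-sensitive $+k$ additive term: if the canonical-set queries had setup cost $\Omega(\log n)$ combined with a non-constant-amortized reporting cost, the sum $\sum k_i$ would be multiplied by a superconstant factor and the time bound would degrade. Theorem~\ref{lemma:faster} is exactly strong enough to avoid this, since after the $O(\log n)$ setup it reports each point in $O(1)$ time, so the induction closes. The other ingredients, namely the rank-space reduction applied once per coordinate, the $S(u,l,r)$ machinery, and the coverage argument for two-sided ranges, follow the patterns already established in Sections~\ref{sec:prelim}--\ref{sec:decodingfast} and introduce no new technical challenges.
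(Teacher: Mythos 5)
Your proposal is correct, and since the paper offers no proof of this theorem beyond the remark that it follows from ``standard techniques,'' there is no competing argument to compare against --- you are supplying exactly the argument the authors intended. The plan is sound: induct on $d$ with Theorem~\ref{lemma:faster} as the base case, observe that dominance is a special case of $(2d-3)$-sided (and that a $(2d-3)$-sided query has at most $d-3$ two-sided coordinates), strip off one two-sided coordinate per inductive level with a fanout-$\rho$ range tree carrying the $S(u,l,r)$ canonical subsets of Section~\ref{sec:decodingfast}, and attach the $(d-1)$-dimensional structure to each canonical set. Your accounting is right on both axes: each point lands in $O(\rho^2\log n/\log\log n)$ canonical sets so space grows by a $\log^{1+2\eps}n/\log\log n$ factor per level, and the query touches $O(\log n/\log\log n)$ canonical sets so the non-output term grows by a $\log n/\log\log n$ factor per level, which telescopes to the stated $O(\log^{d-3}n/(\log\log n)^{d-4}+k)$. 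You also correctly identified the one place the argument could break --- the additive-$k$ term must survive being summed over $\Theta(\log n/\log\log n)$ subqueries, which Theorem~\ref{lemma:faster}'s $O(1)$-per-reported-point guarantee ensures.

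One small imprecision worth tightening: ``rescaling $\eps\mapsto\eps/3$ at every level'' does not literally give the stated bound, because the exponents accumulate additively down the recursion. If $\eps_i$ denotes the fanout parameter used at dimension $i$, the final space exponent above $d-4$ is $\eps_4 + 2\sum_{i=5}^{d}\eps_i$, so one should set each $\eps_i = \eps/(2(d-4)+1)$ (or any constant fraction of $\eps$ depending on $d$), which is a positive constant since $d$ is fixed. The substance of your remark --- that the loss is absorbed because the recursion depth is $O(1)$ --- is exactly right; only the specific factor ``$3$'' needs to become ``a constant depending on $d$.''
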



\section{Conclusions}
\label{sec:conclusion}
In this paper we described data structures with  linear and almost-linear space usage that answer four-dimensional range reporting queries in poly-logarithmic time provided that the query range is bounded on 5 sides. This scenario includes an important special case of dominance range reporting queries that was studied in a number of previous works~\cite{ChazelleE87,VengroffV96,Nekrich07,Afshani08,ChanLP11,Chan13}; for instance, the offline variant of four-dimensional  dominance reporting is used to solve the rectangle enclosure problem~\cite{ChanLP11,AfshaniCT14}. Our result immediately leads to better data structures in $d\ge 4$ dimensions. E.g., we can answer $d$-dimensional dominance range reporting queries in $O(n\log^{d-4+\eps}n)$ space and $O(\log\log n (\log n/\log\log n)^{d-3})$ time.  We  expect  that the methods of this paper can be applied to other geometric problems, such as the offline rectangle enclosure problem.

Our result  demonstrates that the space complexity of four-dimensional queries is determined by the number of sides, i.e., the number of inequalities that are needed to specify the query range. This raises the question about the space complexity of dominance range reporting in five dimensions. Is it possible to construct  a linear-space data structure that supports five-dimensional dominance range reporting queries in poly-logarithmic time?  

Compared to the fastest previous solution for the four-dimensional dominance range reporting problem~\cite{Chan13}, our method decreases the space usage by  $O(\log n)$ factor without increasing the query time.  However, there is still a small gap between the $O(\log n + k)$ query time, achieved by the fastest data structures, and the lower bound of $\Omega(\log n/\log\log n)$, proved in~\cite{Patrascu11}. Closing this gap is another interesting open problem.

\bibliographystyle{plain}
\bibliography{range-rep}
\appendix
\section{Dominance Reporting on a Small Set}
\label{sec:small}
In this section we explain, for completeness, how to answer three-dimensional dominance range reporting queries on a set $S$ of $t=\log^2 n$ points. We assume that points are stored in the rank space.  

 \begin{lemma}
   \label{lemma:small}
If a set $S$ contains $t=O(\log^2 n)$ points in the rank space of $S$, then we can keep $S$ in a data structure  that uses $O(t\log\log n)$ bits and answers three-dimensional dominance range reporting queries in $O(k)$ time. This data structure  uses  a universal look-up table of size $o(n)$.
 \end{lemma}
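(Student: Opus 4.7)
The plan is to exploit that all coordinates of the $t=O(\log^2 n)$ points of $S$ are integers in $\{1,\ldots,t\}$. Each coordinate therefore needs only $\lceil\log t\rceil=O(\log\log n)$ bits, and each point $O(\log\log n)$ bits; packing the points consecutively in memory uses $O(t\log\log n)$ bits total, which already establishes the space bound. The challenge is then to support dominance reporting queries in $O(k)$ time using only this compact representation together with small auxiliary data.

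My approach would be word-parallelism backed by a \emph{universal} look-up table. First I would partition the points of $S$ into $g=\lceil t/b\rceil$ consecutive blocks of $b=\Theta(\log n/\log\log n)$ points each, sorted by decreasing $z$-coordinate; with the compact encoding above, each block fits in a single machine word. On top of the blocks I would build a shallow high-branching balanced tree $T$ of constant height. Every internal node of $T$ packs, in a single $O(\log n)$-bit word, the per-child triples $(\min x,\min y,\min z)$ of all its children, so that one word-level comparison with a query point $(a,b,c)$ identifies in $O(1)$ time exactly which children's subtrees could possibly contain a point dominated by $(a,b,c)$.

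The universal look-up table $\mathcal{L}$ is indexed by a pair $(w,q)$ where $w$ encodes one block and $q$ encodes a query; both together use $O(\log n)$ bits, and by choosing the constant hidden in $b$ small enough we can keep the number of indices below $\sqrt{n}$, so that $\mathcal{L}$ occupies $o(n)$ words. Each entry stores, as a bit-mask, the set of positions within the block whose points are dominated by $q$, together with precomputed offsets that let us emit their (rank-space) coordinates in $O(1)$ per reported point. Because $\mathcal{L}$ depends only on the universe $\{1,\ldots,t\}$, it is constructed once and shared across all instances. Answering a query then amounts to descending $T$: at each visited internal node use the packed summary to select in $O(1)$ the children that must be explored, and at each leaf block invoke $\mathcal{L}$ once to extract and report the dominated points.

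The main obstacle is proving the $O(k)$ time bound instead of $O(k+\mathrm{polylog}(\log n))$. The observation that saves us is that every visited internal node of $T$ must have at least one dominated point in its subtree, for otherwise it would already have been pruned by its parent's word-parallel test; combined with the constant height of $T$ and $O(1)$ work per visited node, this bounds the total number of node visits by $O(k+1)$ and yields the claimed running time. One technical subtlety is making sure the root's summary alone settles the $k=0$ case in $O(1)$, which is handled by letting the root additionally store the global minima of $x$, $y$, and $z$ over all of $S$ and testing them against $(a,b,c)$ before descending.
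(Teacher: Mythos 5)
Your approach has a genuine gap that breaks the $O(k)$ time bound. You claim that ``every visited internal node of $T$ must have at least one dominated point in its subtree, for otherwise it would already have been pruned by its parent's word-parallel test.'' But the test you propose — comparing the query $(a,b,c)$ against the per-child coordinate-wise minima $(\min x,\min y,\min z)$ — is only a \emph{necessary} condition for the subtree to contain a dominated point, not a sufficient one, because the minima in different coordinates may be attained by different points. A subtree containing the two points $(1,t,t)$ and $(t,1,1)$ has $(\min x,\min y,\min z)=(1,1,1)$ and hence survives the filter for the query $(t/2,t/2,t/2)$, yet it contains no dominated point. Thus a constant-height tree of branching factor $\Theta(\log n/\log\log n)$ can force $\Theta(\log n/\log\log n)$ fruitless node visits even when $k=0$, and the claimed $O(k+1)$ bound fails. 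Salvaging this would require replacing the per-child triple by something that certifies the existence of a dominated point, which in 3D is a staircase (Pareto frontier), not an $O(\log\log n)$-bit summary — so the obstacle is structural, not a constant-factor tuning issue.

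The paper avoids exactly this pitfall by decomposing the plane into a $\sqrt{t'}\times\sqrt{t'}$ grid of columns (by $x$) and rows (by $y$). For a grid cell strictly below and to the left of the query point's row and column, \emph{every} point in the cell automatically satisfies the $x$- and $y$-constraints, so the single remaining test $z_{ij}\le c$ on the cell's minimum $z$-value is an \emph{exact} certificate that the cell contributes at least one reported point. The boundary row and column are handled by recursion, and since each recursion level shrinks the instance by a factor $\sqrt{t'}$, only $O(1)$ levels are needed. Your look-up-table machinery for tiny instances and the space accounting are both sound and close in spirit to the paper's base case; the missing ingredient is the $x$/$y$ grid decomposition that makes the filter exact and guarantees the output-sensitive bound.
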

\begin{proof}
We can answer a query on a set $S'$ that contains at most $t'=(1/4)\log n/\log\log n$    points using a look-up table of size $o(n)$. Suppose that all points in $S'$ have positive integer coordinates bounded by $t'$. There are $2^{t'\log t'}$ combinatorially different sets $S'$.  For every instance of $S'$, we can ask $(t')^3$ different queries and the answer to each query consists of $O(t')$ points. Hence the total space needed to keep answers to all possible queries on all instances of $S'$ is $O(2^{(\log t')t'}(t')^5)=o(n)$ points. The general case when point coordinates are bounded by $d$ can be reduced to the case when point coordinates are bounded by $t'$ using reduction to rank space~\cite{GabowBT84,AlstrupBR00}; see Section~\ref{sec:prelim}. 

A query on $S$ can be reduced to $O(1)$ queries on sets that contain $O(t')$ points using the grid approach~\cite{ChanLP11,AlstrupBR00}. The set of points 
$S$ is divided into $\sqrt{t'}$ columns $C_i$ and $\sqrt{t'}$ rows $R_j$ so that every row and every column contains  $t/\sqrt{t'}$ points. 
The bottom set $S_b$ contains a meta-point $(i,j,z_{\min})$ iff the intersection of the $i$-th column and the $j$-th row is not empty: if  $R_j\cap C_i\not=\emptyset$ we store the point $(i,j,z_{ij})$ where $z_{ij}$ is the smallest $z$-coordinate of a point in $R_j\cap C_i$. Since $S_b$ contains at most $d'$ points, we can support queries on $S_b$ in $O(k)$ time. For each meta-point $(i,j)$ in $S_b$ we store the list of points $L_{ij}$ contained in the intersection of the $i$-th column and the $j$-th row, $L_{ij}=C_i\cap R_j\cap P$. Points in $L_{ij}$ are sorted in increasing order of their $z$-coordinates.  Every column $C_i$ and every row $R_j$ is recursively divided in the same manner: if $C_i$ or $R_j$ contains more than $t'$ points, we divide $C_i$ ($R_j$) into $\sqrt{t'}$ rows and $\sqrt{t'}$ columns of equal size and construct a data structure for the set $S_b$.  Since the number of points in a row (column) is decreased by a factor $\sqrt{t'}$ on every recursion level, our data structure has  at most five levels of recursion.

Consider a query $Q=\halfr{a}\times\halfr{b}\times \halfr{c}$. If $Q$ is contained in one column or one row, we answer the query using the data structure for that column/row. Otherwise we identify the row $R_u$ that contains $b$ and the column $C_r$ that contains $a$. Using the data structure on $S_b$, we find all 
meta-points $(i,j,z)$ satisfying $i<r$, $j<u$, and $z\le c$. 
For every found meta-point $(i,j,z)$ we output all points in $L_{ij}$ with $z$-coordinates not exceeding $c$.   
Then we recursively answer the query $Q$ on $R_u$ and $C_r$.  
Since there is a constant number of recursion levels, a query is answered in $O(k)$ time, where $k$ is the number of reported points.  Since each point is stored a constant number of times, the total space usage is $O(t\log\log n)$ bits. 
\end{proof}

This  result can be extended to any set with poly-logarithmic number of points, provided coordinates are in the rank space.

\end{document}